\documentclass[11pt]{amsart}

\usepackage{amsmath,amssymb,amsfonts}
\usepackage{graphicx}
\usepackage{xcolor}
\usepackage{bm}
\usepackage{hyperref}
\usepackage{setspace}
\usepackage{microtype}
\usepackage{float}
\usepackage{quantikz}

\newtheorem{theorem}{Theorem}[section]
\newtheorem{proposition}[theorem]{Proposition}
\newtheorem{lemma}[theorem]{Lemma}
\newtheorem{corollary}[theorem]{Corollary}
\theoremstyle{definition}

\newtheorem{remark}[theorem]{Remark}

\newcommand{\expected}[1]{\mathbb{E}\!\left[#1\right]}
\newcommand{\variance}[1]{\mathrm{Var}\!\left[#1\right]}

\newcommand{\operatorA}{\mathcal{A}}
\newcommand{\operatorQ}{Q}

\title[QAE in Gradient-Based Stochastic Optimization]{%
  Quantum Amplitude Estimation in Gradient-Based Stochastic Optimization}

\author{Raffaele Sarno}
\address{Alma Mater Studiorum -- University of Bologna, Bologna, Italy}
\email{raffaele.sarno@studio.unibo.it}
\thanks{ORCID: 0009-0007-7204-5901}
\date{}

\keywords{Quantum computing, Stochastic optimization, Monte Carlo methods,
  Quantum Amplitude Estimation, Computational complexity,
  Stochastic Gradient Descent, Gradient variance reduction}

\subjclass[2020]{68Q12, 81P68, 65C05, 90C15}

\begin{document}
\setstretch{1.3}

\begin{abstract}
In this paper we prove, both mathematically and through a simulation, how the
Quantum Amplitude Estimation algorithm can obtain quadratic improvements with
respect to the Monte Carlo method in gradient-based stochastic optimization,
highlighting the central role of the Quantum Phase Estimation concentration
guarantee in achieving the predicted advantage.
\end{abstract}

\maketitle

\section{Introduction}
\label{sec:introduction}

Gradient-based stochastic optimization depends strongly on the computation of
expected values. The most common approach to this kind of problem, in the
absence of an analytical solution, is the Monte Carlo (MC) method.

This method has a fundamental limitation that lies in its error convergence
rate, which translates directly into noise in the gradient estimation,
compromising its stability.

An alternative that has emerged in recent years is quantum computing. Its
properties allow us to outperform classical computation in several
problems~\cite{Preskill2018}, including the estimation of expectations.

This kind of problem can be addressed by the Quantum Amplitude Estimation
(QAE) algorithm~\cite{Brassard2002}, which offers a quadratic speed-up with
respect to the MC method~\cite{Montanaro2015}, thus enhancing the gradient
stability.

\subsection{Outline of the manuscript}

The rest of this paper is organized as follows. In Section~\ref{sec:monte_carlo}
we provide a generic estimation of a parameter through the MC method in order
to derive its error convergence rate. In Section~\ref{sec:qae} we prove the
quadratic advantage that the QAE algorithm offers with respect to the MC
method. Section~\ref{sec:gradient} extends this result to the stochastic
gradient variance, proving its quadratic reduction. Finally, in
Section~\ref{sec:simulation} a simulation validates all the theoretical
results and their consequences for gradient stability. We conclude in
Section~\ref{sec:conclusion} with the limitations of quantum technologies and
a direction for future work.

\section{Monte Carlo Method}
\label{sec:monte_carlo}
In this section we retrieve the MC method's error convergence rate, supposing
we estimate a generic parameter $\theta$.

\subsection{Mathematical formulation and error analysis}

In order to estimate $\theta$ we perform $M$ simulations
generating $M$ independent and identically distributed random variables
$X_1,X_2,\ldots,X_M$.
As an estimator of $\theta$ we take the
arithmetic mean:
\[
  \bar{X} = \frac{1}{M}\sum_{i=1}^{M}X_i.
\]
Assuming that the generation of random variables is unbiased, the mean of the
estimator is $\expected{\bar{X}} = \theta$ and its variance is:
\begin{align*}
  \variance{\bar{X}}
    &= \expected{(\bar{X}-\theta)^{2}}
     = \variance{\frac{1}{M}\sum_{i=1}^{M}X_i}
     = \frac{1}{M^{2}}\sum_{i=1}^{M}\variance{X_i}
     = \frac{\sigma^{2}}{M},
\end{align*}
so that its standard deviation is:
\[
  \mathrm{STD}[\bar{X}] = \frac{\sigma}{\sqrt{M}},
  \qquad\text{i.e.,}\quad
  \mathrm{STD}[\bar{X}] = \mathcal{O}\!\left(\frac{1}{\sqrt{M}}\right).
\]

\begin{remark}[Classical optimality]\label{rem:mc_opt}
The $\mathcal{O}(M^{-1/2})$ rate is a lower bound for any unbiased estimator
based on $M$ i.i.d.\ samples. According to the Cram\'{e}r--Rao
inequality, for any unbiased estimator $\hat{a}$ the following holds:
\[
  \variance{\hat{a}} \;\ge\; \frac{1}{M\,\mathcal{I}(a)},
\]
where $\mathcal{I}(a) = (a(1-a))^{-1}$ is the Fisher information of a
Bernoulli$(a)$ trial.  This gives
$\mathrm{STD}[\hat{a}] \ge \sqrt{a(1-a)/M}$,
which the MC estimator achieves with equality, confirming its
classical optimality.
\end{remark}

\section{Quantum Amplitude Estimation}
\label{sec:qae}

In this section we retrieve the QAE error convergence rate, supposing we
estimate a generic amplitude $a$, and prove the quadratic improvement over
the MC method.

\subsection{Mathematical formulation}

In order to formalize the problem, we consider the decomposition of the Hilbert
space $\mathcal{H}$ into two orthogonal subspaces:
\begin{itemize}
  \item[-] $\mathcal{H}_{0}$, the ``bad'' subspace, comprising all states that
        do not correspond to the desired outcome;
  \item[-] $\mathcal{H}_{1}$, the ``good'' subspace, containing all states
        associated with successful outcomes.
\end{itemize}

Every state $|\psi\rangle$ prepared on this space can be expressed as:
\[
  |\psi\rangle = |\psi_{0}\rangle + |\psi_{1}\rangle,
  \qquad
  |\psi_{0}\rangle\in\mathcal{H}_{0},\quad
  |\psi_{1}\rangle\in\mathcal{H}_{1}.
\]
The quantity of interest is the probability amplitude of the ``good''
component:
\[
  a = \langle\psi_{1}|\psi_{1}\rangle.
\]

\subsection{Quantum circuit implementation}

The QAE circuit is composed of two registers. The first one is an $m$-qubit
\emph{counting register} used to encode the powers of the Grover operator.
The second register has $n+1$ qubits and is used to encode the problem
instance, and includes the flag qubit that identifies whether the state
belongs to the ``good'' subspace.

\begin{figure}[H]
\begin{center}
\begin{quantikz}[column sep=0.45cm,row sep=0.35cm]
  \lstick{$(m{-}1)\ \ket{0}$} & \gate{H} & \qw      & \ \ldots\ & \ctrl{3}
      & \gate[3,disable auto height]{F_{M}^{-1}} & \meter{} \\
  \lstick{$(j)\ \ket{0}$}     & \gate{H} & \qw      & \ \ldots\ & \qw
      & & \meter{} \\
  \lstick{$(0)\ \ket{0}$}     & \gate{H} & \ctrl{1} & \ \ldots\ & \qw
      & & \meter{} \\
  \lstick{$\ket{0}_n,\ket{0}$} & \gate{\mathcal{A}} & \gate{Q^{2^{0}}}
      & \ \ldots\ & \gate{Q^{2^{m-1}}} & \qw & \qw
\end{quantikz}
\caption{The Quantum Amplitude Estimation circuit.}
\label{fig:qae_circuit}
\end{center}
\end{figure}

The process starts by preparing the counting register in a uniform
superposition, while the second register is initialized by the operator
$\operatorA$, defined in the next subsections, encoding the amplitude $a$ of
interest. Controlled powers $Q^{2^{j}}$ of the Grover operator are then
applied, conditioned on the state of the counting qubits. These controlled
operations coherently rotate the amplitude of the ``good'' states,
embedding the phase information in the counting register. Finally, the
inverse Quantum Fourier Transform (QFT) $F_{M}^{-1}$ is applied to extract the
phase, which is then converted into an estimation of the amplitude.

Considering the system state $|\xi\rangle = |\eta\rangle\otimes|\psi\rangle$,
where $|\eta\rangle$ is the state of the first register and $|\psi\rangle$
is that of the second, the initialization gives
\[
  |\xi\rangle
  = |0\rangle^{\otimes m}\otimes|0\rangle^{\otimes n+1}.
\]

\subsubsection{First register preparation}

The counting register is put into uniform superposition via the
QFT:
\[
  F_M : |x\rangle
  \;\longmapsto\;
  \frac{1}{\sqrt{M}}\sum_{y=0}^{M-1}e^{2\pi ixy/M}|y\rangle,
\]
which for $M=2^{m}$ can be simplified to the application of $m$ Hadamard gates,
yielding:
\[
  |\eta\rangle
  = F_M|0\rangle^{\otimes m}
  = \frac{1}{\sqrt{M}}\sum_{j=0}^{M-1}|j\rangle.
\]

\subsubsection{Second register preparation}

The second register is set up using the state preparation operator
$\operatorA$, which encodes the probability amplitude $a$ of the ``good''
subspace:
\[
  \operatorA|0\rangle^{\otimes n+1}
  = \sqrt{1-a}\,|\psi_{0}\rangle_{n}|0\rangle
    + \sqrt{a}\,|\psi_{1}\rangle_{n}|1\rangle.
\]

\subsubsection{Grover operator construction}

Another component of the algorithm is the Grover-like operator~\cite{Grover1996}:
\[
  \operatorQ
  = -\operatorA\,S_{0}\,\operatorA^{\dagger}\,S_{\psi_1}.
\]
Here $S_{0}$ causes a phase flip to the all-zero state
$|0\rangle^{\otimes n+1}$, while $S_{\psi_1}$ performs a reflection about
the ``good'' subspace spanned by $|\psi_{1}\rangle$.

The operator $\operatorQ$ acts as a rotation by angle $2\theta$ within the
two-dimensional invariant subspace spanned by
$\{|\psi_{0}\rangle,|\psi_{1}\rangle\}$.  Its orthonormal eigenbasis is:
\[
  |\psi_{+}\rangle
  = \frac{1}{\sqrt{2}}\!\left(
      \frac{1}{\sqrt{a}}|\psi_{1}\rangle
      +\frac{i}{\sqrt{1-a}}|\psi_{0}\rangle\right),
  \qquad
  |\psi_{-}\rangle
  = \frac{1}{\sqrt{2}}\!\left(
      \frac{1}{\sqrt{a}}|\psi_{1}\rangle
      -\frac{i}{\sqrt{1-a}}|\psi_{0}\rangle\right).
\]
Then the state prepared by $\operatorA$ can be re-expressed in this eigenbasis as:
\[
  \operatorA|0\rangle
  = \frac{-i}{\sqrt{2}}
    \!\left(e^{i\theta}|\psi_{+}\rangle - e^{-i\theta}|\psi_{-}\rangle\right),
\]
revealing the phase information that QAE extracts via interference.

\begin{lemma}[Amplitude amplification rotation]\label{lem:rotation}
For any integer $j \ge 0$ and $\theta = \arcsin(\!\sqrt{a}\,)$:
\begin{equation}\label{eq:rotation}
  Q^{j}\,\mathcal{A}|0\rangle
  \;=\;
  \sin\!\bigl((2j+1)\theta\bigr)\,|\psi_1\rangle|1\rangle
  \;+\;
  \cos\!\bigl((2j+1)\theta\bigr)\,|\psi_0\rangle|0\rangle.
\end{equation}
\end{lemma}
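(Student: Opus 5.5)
The plan is induction on $j$, with all the work done inside the two-dimensional real plane spanned by the orthonormal vectors
\[
|g\rangle := |\psi_1\rangle|1\rangle, \qquad |b\rangle := |\psi_0\rangle|0\rangle .
\]

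\emph{Base case.} With $\theta=\arcsin(\sqrt a)$ and $a\in[0,1]$, we have $\sin\theta=\sqrt a$ and $\cos\theta=\sqrt{1-a}$. The defining relation of $\mathcal A$ is then exactly
\[
\mathcal A|0\rangle=\sin\theta\,|g\rangle+\cos\theta\,|b\rangle,
\]
which is the case $j=0$ of \eqref{eq:rotation}.

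\emph{Key step: the action of $Q$.} I want to show that $Q$ maps $\mathrm{span}\{|g\rangle,|b\rangle\}$ to itself and acts there as the rotation by $2\theta$. The cleanest route is to write $Q$ as a product of two reflections in this plane.

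First, $S_{\psi_1}$ is the phase flip on the good subspace. So $S_{\psi_1}|g\rangle=-|g\rangle$ and $S_{\psi_1}|b\rangle=|b\rangle$; that is, $S_{\psi_1}=I-2|g\rangle\langle g|$ restricted to the plane.

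Second, set $|\psi\rangle:=\mathcal A|0\rangle$. Then $\mathcal A S_0\mathcal A^\dagger = I-2|\psi\rangle\langle\psi|$, so
\[
-\mathcal A S_0 \mathcal A^\dagger = 2|\psi\rangle\langle\psi|-I ,
\]
which is the reflection about $|\psi\rangle$. Since $|\psi\rangle$ lies in the plane, both reflections preserve it.

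It follows that
\[
Q=(2|\psi\rangle\langle\psi|-I)(I-2|g\rangle\langle g|).
\]
Both factors are real orthogonal reflections on the plane: the first about the line through $|\psi\rangle$, the second about the line through $|b\rangle$. The angle between these two lines is $\theta$, since $|\psi\rangle$ sits at angle $\theta$ from $|b\rangle$ toward $|g\rangle$. By the standard fact that a composition of two reflections is a rotation by twice the angle between their axes, $Q$ is rotation by $2\theta$ from $|b\rangle$ toward $|g\rangle$.

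Concretely, I will verify this on a general vector $\sin\phi\,|g\rangle+\cos\phi\,|b\rangle$:
\begin{itemize}
\item $S_{\psi_1}$ sends it to $-\sin\phi\,|g\rangle+\cos\phi\,|b\rangle$, which is the vector at angle $-\phi$.
\item Reflecting the angle $-\phi$ about the line at angle $\theta$ gives angle $2\theta-(-\phi)=\phi+2\theta$.
\end{itemize}
Hence
\[
Q\bigl(\sin\phi\,|g\rangle+\cos\phi\,|b\rangle\bigr)=\sin(\phi+2\theta)|g\rangle+\cos(\phi+2\theta)|b\rangle .
\]

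\emph{Induction.} Apply this identity with $\phi=(2j+1)\theta$ to the inductive hypothesis for $j$. It gives \eqref{eq:rotation} for $j+1$, which completes the induction.

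\emph{Main obstacle.} The delicate point is sign and convention bookkeeping. The paper's phrase ``reflection about the good subspace'' for $S_{\psi_1}$ must be read as the phase flip $I-2|g\rangle\langle g|$. The overall minus sign in $Q$ must then be absorbed correctly, so that the result is a rotation by $+2\theta$ rather than $-2\theta$ or a rotation composed with $-I$.

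I will fix these conventions explicitly at the outset:
\begin{itemize}
\item $S_0=I-2|0\rangle\langle0|$;
\item $S_{\psi_1}=I-2\Pi_{\mathrm{good}}$, where $\Pi_{\mathrm{good}}$ is the projector onto states with flag qubit $|1\rangle$.
\end{itemize}
With these fixed, I will check that the two minus signs cancel, as computed above.

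As a cross-check, I will confirm that the stated eigenvectors $|\psi_\pm\rangle$ have eigenvalues $e^{\pm 2i\theta}$. This is consistent with a rotation by $2\theta$, and with $|\psi_\pm\rangle$ being the standard eigenvectors of the real rotation matrix.
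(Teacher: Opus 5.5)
Your proof is correct and takes the same route as the paper: induction on $j$, with the inductive step resting on $Q$ acting as a rotation by $+2\theta$ on $\mathrm{span}\{|\psi_0\rangle|0\rangle,|\psi_1\rangle|1\rangle\}$, followed by angle addition. The only difference is that you actually justify the rotation claim, by fixing $S_0=I-2|0\rangle\langle 0|$ and $S_{\psi_1}=I-2\Pi_{\mathrm{good}}$ and writing $Q=(2|\psi\rangle\langle\psi|-I)(I-2|g\rangle\langle g|)$. The paper only asserts it, and loosely calls $S_0$ itself a reflection in that plane when it is really $\mathcal{A}S_0\mathcal{A}^\dagger$, so your version is the more complete one.
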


\begin{proof}
By induction on $j$.

\textit{Base case} ($j=0$): the definition of $\mathcal{A}$ gives:
\[
  \mathcal{A}|0\rangle
  = \sqrt{1-a}\,|\psi_0\rangle|0\rangle + \sqrt{a}\,|\psi_1\rangle|1\rangle
  = \cos\theta\,|\psi_0\rangle|0\rangle + \sin\theta\,|\psi_1\rangle|1\rangle,
\]
which matches \eqref{eq:rotation} for $j=0$.

\textit{Inductive step}: assume \eqref{eq:rotation} holds for some $j$.
Since $S_0$ and $S_{\psi_1}$ are reflections in the two-dimensional
invariant subspace $\mathrm{span}\{|\psi_0\rangle|0\rangle,|\psi_1\rangle|1\rangle\}$,
their composition $\operatorQ = -\operatorA S_0 \operatorA^\dagger S_{\psi_1}$
acts as a rotation by $+2\theta$ in that subspace.  Applying $Q$ to the
inductive hypothesis and using the angle-addition identities with
$\alpha=(2j+1)\theta$:
\[
  Q\bigl[\cos\alpha\,|\psi_0\rangle|0\rangle
        +\sin\alpha\,|\psi_1\rangle|1\rangle\bigr]
  = \cos(\alpha+2\theta)\,|\psi_0\rangle|0\rangle
   +\sin(\alpha+2\theta)\,|\psi_1\rangle|1\rangle,
\]
which is \eqref{eq:rotation} for $j+1$.
\end{proof}

\subsubsection{Controlled applications}

Then controlled powers of $\operatorQ$
are applied through the operator $\Lambda(\operatorQ)$:
\[
  \Lambda(\operatorQ):|j\rangle|y\rangle
  \;\longmapsto\;
  |j\rangle\!\left(\operatorQ^{j}|y\rangle\right),
  \qquad 0\le j < M = 2^{m}.
\]
By Lemma~\ref{lem:rotation}, each controlled power acts in the
$\{|\psi_0\rangle,|\psi_1\rangle\}$ basis as
\[
  \operatorQ^{j}\operatorA|0\rangle
  = \sin\!\bigl((2j+1)\theta\bigr)|\psi_{1}\rangle|1\rangle
    + \cos\!\bigl((2j+1)\theta\bigr)|\psi_{0}\rangle|0\rangle .
\]
Equivalently, in the eigenbasis $\{|\psi_\pm\rangle\}$ with eigenvalues
$\lambda_{\pm}=e^{\pm i2\theta}$, the operator $\operatorQ^{j}$ multiplies
each eigencomponent by $e^{\pm i2j\theta}$, so that the phase $\theta$ becomes
encoded in the counting register. Introducing the normalized phase
$\varphi=\theta/\pi$, this phase reads $e^{\pm 2\pi i j\varphi}$, and the
overall system state becomes:
\[
  |\xi\rangle
  = \frac{-i}{\sqrt{2M}}
    \sum_{j=0}^{M-1}|j\rangle\otimes
    \Bigl(e^{i\theta}e^{2\pi ij\varphi}|\psi_{+}\rangle
         - e^{-i\theta}e^{-2\pi ij\varphi}|\psi_{-}\rangle\Bigr).
\]

\subsubsection{Inverse Quantum Fourier Transform}

Applying the inverse QFT to the counting register concentrates the
probability distribution around an integer $y$ that approximates
$M\varphi = M\theta/\pi$:
\[
  |\eta\rangle
  = F_{M}^{-1}\!\left(
      \frac{1}{\sqrt{M}}\sum_{j=0}^{M-1}e^{2\pi ij\varphi}|j\rangle
    \right)
  = \frac{1}{M}\sum_{x=0}^{M-1}\sum_{j=0}^{M-1}
    e^{\frac{2\pi ij}{M}(M\varphi - x)}|x\rangle.
\]

\subsubsection{Measurement and estimation}

Measuring the counting register yields an integer $y\in\{0,\ldots,M-1\}$,
expressed in binary base, from which one constructs:
\[
  \tilde{\theta} = \frac{\pi y}{M},
  \qquad
  \tilde{a} = \sin^{2}(\tilde{\theta}).
\]
This completes the standard QAE estimation procedure.
Alternative formulations that do not involve the inverse QFT and use
maximum-likelihood estimation have been proposed in~\cite{Suzuki2020}; these
variants reduce circuit depth, introducing a small constant in the error bound
but sharing the same asymptotic $\mathcal{O}(M^{-1})$ guarantee.

\begin{proposition}[QPE measurement concentration]\label{prop:qpe}
Suppose $y^{*} = \lfloor M\theta/\pi \rceil$ to be the integer nearest to
$M\theta/\pi$.  The probability of measuring outcome $y$ satisfies:
\begin{equation}\label{eq:qpe_prob}
  P[Y=y]
  = \frac{1}{4M^{2}}
    \!\left[
      \frac{\sin^{2}\!\bigl(M(\theta-\pi y/M)\bigr)}%
           {\sin^{2}\!\bigl(\theta-\pi y/M\bigr)}
     +\frac{\sin^{2}\!\bigl(M(\theta+\pi y/M)\bigr)}%
           {\sin^{2}\!\bigl(\theta+\pi y/M\bigr)}
    \right],
\end{equation}
with each ratio interpreted as $M^{2}$ when its denominator vanishes.
The total probability of the two correct outcomes satisfies
\[
  P\bigl[Y\in\{y^{*},\,M-y^{*}\}\bigr]
  \;\ge\; \frac{8}{\pi^{2}} \;\approx\; 0.81.
\]
\end{proposition}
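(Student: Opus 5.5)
The plan has two parts. First derive \eqref{eq:qpe_prob} from the state $|\xi\rangle$ written just before the inverse QFT. Then bound the mass at $y^{*}$ and $M-y^{*}$ with the Fejér-kernel estimate. The combination step falls short of the stated constant $8/\pi^{2}$ by a factor of four, and I do not yet see how to close that gap; I flag it below.

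\textbf{The formula.} Apply $F_M^{-1}\otimes I$ to $|\xi\rangle$. Because $|\psi_{+}\rangle$ and $|\psi_{-}\rangle$ are orthonormal, the amplitude of $|y\rangle|\psi_{\pm}\rangle$ is
\[
\frac{\mp i\,e^{\pm i\theta}}{\sqrt{2}\,M}\sum_{j=0}^{M-1}e^{2\pi i j(\pm\varphi - y/M)} .
\]
Summing the geometric series gives $\bigl|\sum_{j}e^{2ij\Delta}\bigr|^{2}=\sin^{2}(M\Delta)/\sin^{2}\Delta$, where $\Delta=\theta\mp\pi y/M$. The ratio is read as $M^{2}$ when $\sin\Delta=0$, which is its continuous limit. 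Squaring and adding the two orthogonal branches yields a sum of two Fejér ratios, one with $\theta-\pi y/M$ and one with $\theta+\pi y/M$. This is \eqref{eq:qpe_prob}. I must check the prefactor here against the normalisation of $|\xi\rangle$ as the paper writes it. A direct count from that $|\xi\rangle$ gives $1/(2M^{2})$, not the stated $1/(4M^{2})$, so this check is itself unresolved.

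\textbf{Symmetry between the branches.} The map $\Delta\mapsto\sin^{2}(M\Delta)/\sin^{2}\Delta$ is even and $\pi$-periodic. Replacing $y$ by $M-y$ therefore turns $\theta+\pi y/M$ into $\theta+\pi-\pi(M-y)/M$ inside a $\pi$-periodic function. So the "$+$" term at $M-y$ equals the "$-$" term at $y$, and the other pair likewise. Consequently
\[
P[Y=y^{*}]+P[Y=M-y^{*}] = 2\,P[Y=y^{*}],
\]
which is at least twice the "$-$" branch contribution at $y^{*}$. When $y^{*}=M-y^{*}$, i.e.\ $y^{*}\in\{0,M/2\}$, the two outcomes coincide. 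That case is checked directly, since both ratios are then evaluated at explicit points.

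\textbf{The Fejér bound.} By the choice of $y^{*}$ we have $|\Delta^{*}|=|\theta-\pi y^{*}/M|\le \pi/(2M)$, so $M|\Delta^{*}|\le\pi/2$. Use $|\sin x|\ge 2|x|/\pi$ on $[0,\pi/2]$ in the numerator and $|\sin\Delta|\le|\Delta|$ in the denominator. This gives
\[
\frac{\sin^{2}(M\Delta^{*})}{\sin^{2}\Delta^{*}}\;\ge\;\frac{4M^{2}}{\pi^{2}} .
\]

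\textbf{Main obstacle: reaching $8/\pi^{2}$.} I expect this to fail as planned. Combining the symmetry step with the Fejér bound lower-bounds the two-outcome probability by $2\cdot\frac{1}{4M^{2}}\cdot\frac{4M^{2}}{\pi^{2}}=2/\pi^{2}$. Even with the $1/(2M^{2})$ prefactor from the direct computation, the bound is only $4/\pi^{2}$. The per-branch Fejér estimate for the single nearest grid point is $4/\pi^{2}$, and it is sharp when $M\varphi$ sits exactly halfway between two integers. So this argument cannot reach $8/\pi^{2}$ for the event $\{y^{*},M-y^{*}\}$.

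What I would try first is to add the cross contribution, i.e.\ the "$+$" ratio evaluated at $y^{*}$. That term is only non-negligible when $\theta$ lies near $0$ or near $\pi/2$. In the generic regime I do not see how it supplies the missing factor. I would therefore recheck the state normalisation and the intended event, since the factor of four may come from one of them, before attempting a direct proof of $8/\pi^{2}$ as stated.
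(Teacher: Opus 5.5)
Your computations are correct. Both discrepancies you flag are errors in the statement, and in the paper's own proof, not gaps in your argument. Up to the constant $4/\pi^2$ your route is the paper's: geometric sums, Fej\'er ratios, no interference between the orthogonal $|\psi_\pm\rangle$ branches, the mirror $y\leftrightarrow M-y$, then $|\Delta^*|\le\pi/(2M)$. Your version is the cleaner one, since the paper's exponents $e^{2\pi ij(\theta-\pi y/M)}$ conflate $\theta$ with $\varphi=\theta/\pi$.

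\textbf{The prefactor.} Your $1/(2M^2)$ is right. The state $|\xi\rangle$ is normalized, and each branch carries $1/(\sqrt{2}M)$ times a geometric sum. By Parseval, $\sum_{y}\sin^2(M(\theta\mp\pi y/M))/\sin^2(\theta\mp\pi y/M)=M^2$, so with $1/(4M^2)$ the probabilities would sum to $1/2$. The paper gets $1/(4M^2)$ by writing $\alpha_y$ as a scalar with prefactor $1/(2M)$. Yet its next inequality, $P_{\mathrm{good}}\ge M^{-2}\sin^2(M\delta)/\sin^2\delta$, only follows from $1/(2M^2)$.

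\textbf{The bound.} The constant $8/\pi^2$ is false for the event $\{y^*,M-y^*\}$, and your $4/\pi^2$ is asymptotically sharp, so stop looking for the missing factor. Take $M=8$ and let $M\theta/\pi$ increase to $5/2$, so that $y^*=2$. Then $\theta-\pi y^*/M\to\pi/16$, $\theta+\pi y^*/M\to 9\pi/16$, and
\[
  P[Y\in\{2,6\}]\;\longrightarrow\;\frac{1}{64}\left(\frac{1}{\sin^{2}(\pi/16)}+\frac{1}{\cos^{2}(\pi/16)}\right)=\frac{2+\sqrt{2}}{8}\approx 0.43 .
\]
More generally, take $M\theta/\pi$ half-integer with $\theta$ bounded away from $0$ and $\pi/2$. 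Then the cross term you hoped to use is $O(1)$, against a main term of order $4M^2/\pi^2$.

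The paper's last step (``accounting for both peaks symmetrically'') double-counts the mirror symmetry that already produced the factor $2$. The $8/\pi^2$ of Brassard et al.\ belongs to a larger event: $y$ or $M-y$ lies in $\{\lfloor M\theta/\pi\rfloor,\lceil M\theta/\pi\rceil\}$, i.e.\ $|\tilde\theta-\theta|\le\pi/M$ up to the mirror. Your method proves that version. Write $\theta-\pi\lfloor M\theta/\pi\rfloor/M=\pi u/M$ with $u\in(0,1)$; the integer case is trivial. Your symmetry step, applied to both bracketing points, together with $\sin x\le x$ gives
\[
  P\;\ge\;\frac{\sin^{2}(\pi u)}{M^{2}}\left(\frac{1}{\sin^{2}(\pi u/M)}+\frac{1}{\sin^{2}(\pi(1-u)/M)}\right)\;\ge\;\frac{\sin^{2}(\pi u)}{\pi^{2}}\left(\frac{1}{u^{2}}+\frac{1}{(1-u)^{2}}\right)\;\ge\;\frac{8}{\pi^{2}},
\]
with the minimum attained at $u=1/2$. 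This corrected event is exactly the hypothesis $|\delta\theta|\le\pi/M$ used in Proposition~\ref{prop:qae_bound}. So the $81\%$ claim of Remark~\ref{rem:prob} survives once the event is fixed.
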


\begin{proof}
After the inverse QFT, the amplitude of outcome $x$ in the counting register
is:
\[
  \alpha_x = \frac{1}{M}\sum_{j=0}^{M-1}e^{\frac{2\pi ij}{M}(M\theta - x)}.
\]
Using the eigendecomposition
$\mathcal{A}|0\rangle = \frac{-i}{\sqrt{2}}
(e^{i\theta}|\psi_+\rangle - e^{-i\theta}|\psi_-\rangle)$
with eigenvalues $\lambda_\pm = e^{\pm 2i\theta}$, the full amplitude of
outcome $y$ receives contributions from both phases $\pm\theta$:
\[
  \alpha_y
  = \frac{1}{2M}
    \left[
      \sum_{j=0}^{M-1}e^{2\pi ij(\theta - \pi y/M)}
     +\sum_{j=0}^{M-1}e^{-2\pi ij(\theta + \pi y/M)}
    \right].
\]
Each sum is a finite geometric series. Setting $\phi_\pm = \theta \mp \pi y/M$,
both have the form $\sum_{j=0}^{M-1}e^{2\pi ij\phi}$ with ratio
$r = e^{2\pi i\phi}$, so that
\[
  \sum_{j=0}^{M-1}e^{2\pi ij\phi}
  = \frac{1-e^{2\pi iM\phi}}{1-e^{2\pi i\phi}}.
\]
Using the identity $1-e^{2i\alpha} = -e^{i\alpha}\,2i\sin\alpha$ on both
numerator and denominator (with $\alpha = \pi M\phi$ and $\alpha = \pi\phi$
respectively):
\[
  \sum_{j=0}^{M-1}e^{2\pi ij\phi}
  = \frac{-e^{i\pi M\phi}\,2i\sin(\pi M\phi)}
         {-e^{i\pi\phi}\,2i\sin(\pi\phi)}
  = e^{i\pi(M-1)\phi}\,\frac{\sin(\pi M\phi)}{\sin(\pi\phi)},
\]
with the ratio interpreted as $M$ when $\sin(\pi\phi)=0$. The exponential
prefactor $e^{i\pi(M-1)\phi}$ has unit modulus, so taking the squared modulus
of each geometric sum gives:
\[
  \left|\sum_{j=0}^{M-1}e^{2\pi ij\phi_\pm}\right|^2
  = \frac{\sin^2(\pi M\phi_\pm)}{\sin^2(\pi\phi_\pm)}.
\]
Since $|\psi_+\rangle$ and $|\psi_-\rangle$ are orthonormal, the two terms
in $\alpha_y$ do not interfere, and the probability is the sum of their
squared moduli:
\[
  P[Y=y]
  = |\alpha_y|^2
  = \frac{1}{4M^2}
    \left[
      \frac{\sin^2\!\bigl(\pi M\phi_+\bigr)}{\sin^2\!\bigl(\pi\phi_+\bigr)}
     +\frac{\sin^2\!\bigl(\pi M\phi_-\bigr)}{\sin^2\!\bigl(\pi\phi_-\bigr)}
    \right].
\]
Substituting back $\phi_+ = \theta - \pi y/M$ and $\phi_- = \theta + \pi y/M$
we recover~\eqref{eq:qpe_prob}.

Let $y^* = \lfloor M\theta/\pi\rceil$ be the nearest integer. The total
probability of the two correct outcomes is:
\[
  P_{\mathrm{good}}
  = P[Y=y^*] + P[Y=M-y^*]
  \ge \frac{1}{M^2}
      \left(\frac{\sin(M\delta)}{\sin(\delta)}\right)^2,
  \qquad
  \delta = \theta - \frac{\pi y^*}{M}.
\]
Since $|\delta|\le \pi/(2M)$ by definition of $y^*$, the worst case is
$|\delta|=\pi/(2M)$, giving $\sin(M\delta)=\sin(\pi/2)=1$ and
$\sin(\delta)\le\delta\le\pi/(2M)$.  Hence:
\[
  P_{\mathrm{good}}
  \ge \frac{1}{M^2}\cdot\frac{1}{(\pi/(2M))^2}
  = \frac{4}{\pi^2}.
\]
Accounting for both peaks symmetrically yields the tighter bound
$P_{\mathrm{good}} \ge 8/\pi^2 \approx 0.81$,
as shown in~\cite{Brassard2002}.
\end{proof}

\newpage
\subsection{Error Analysis}
\label{sec:error_analysis}

\begin{proposition}[QAE error bound]\label{prop:qae_bound}
Let $a\in(0,1)$ be the true amplitude and $\tilde{a}=\sin^{2}(\pi y/M)$
the estimation from the measurement $y$, then:
\begin{equation}
  \label{eq:qae_bound}
  |a-\tilde{a}|
  \;\le\;
  \frac{2\pi\sqrt{a(1-a)}}{M} + \frac{\pi^{2}}{M^{2}}
  \;=\; \mathcal{O}\!\left(\frac{1}{M}\right).
\end{equation}
\end{proposition}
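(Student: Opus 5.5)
We work on the event, guaranteed with probability at least $8/\pi^2$ by Proposition~\ref{prop:qpe}, that the measured outcome is $y\in\{y^*,M-y^*\}$. Reducing the bound on $|a-\tilde a|$ to a bound on angles then follows the standard argument of~\cite{Brassard2002}. Since $\sin^2(\pi(M-y)/M)=\sin^2(\pi y/M)$, both correct outcomes give the same estimate, so we may take $y=y^*$. Set $\tilde\theta=\pi y^*/M$ and $\varepsilon=\tilde\theta-\theta$. By the definition of $y^*$ as the nearest integer to $M\theta/\pi$, we have $|\varepsilon|\le\pi/(2M)$, and in particular $|\varepsilon|\le \pi/M$, which is what we carry forward.

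The core step is a trigonometric expansion of $\tilde a - a = \sin^2(\theta+\varepsilon)-\sin^2\theta$. Using $\sin^2 x=\tfrac12(1-\cos 2x)$,
\[
  \sin^2(\theta+\varepsilon)-\sin^2\theta
  = \tfrac12\bigl(\cos 2\theta-\cos(2\theta+2\varepsilon)\bigr)
  = \sin(2\theta+\varepsilon)\sin\varepsilon .
\]
Expanding $\sin(2\theta+\varepsilon)=\sin 2\theta\cos\varepsilon+\cos 2\theta\sin\varepsilon$ gives
\[
  \tilde a - a = \sin 2\theta\,\sin\varepsilon\cos\varepsilon + \cos 2\theta\,\sin^2\varepsilon .
\]
Since $\theta=\arcsin\sqrt a$, we have $\sin 2\theta = 2\sqrt{a(1-a)}$. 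Combining $|\sin\varepsilon\cos\varepsilon|\le|\varepsilon|$, $|\cos 2\theta|\le 1$ and $\sin^2\varepsilon\le\varepsilon^2$ yields
\[
  |a-\tilde a|\le 2\sqrt{a(1-a)}\,|\varepsilon| + \varepsilon^2 .
\]

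Finally, substituting $|\varepsilon|\le \pi/M$ gives exactly~\eqref{eq:qae_bound}. Note that $|\varepsilon|\le\pi/(2M)$ would even give a sharper constant; the stated form matches~\cite{Brassard2002}, where the general statement is for $|\varepsilon|\le k\pi/M$ with $k=1$. The $\mathcal{O}(1/M)$ conclusion follows because the first term dominates for fixed $a\in(0,1)$.

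The main obstacle is interpretive rather than computational. The bound holds only with probability at least $8/\pi^2$, not deterministically, so the proof must state explicitly that it is conditioned on the good event of Proposition~\ref{prop:qpe}. One then notes the usual median-of-repetitions boost if a higher confidence is needed. A secondary point is handling the symmetric outcome $M-y^*$, which is covered by the symmetry of $\sin^2$ noted above.
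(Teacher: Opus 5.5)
Your proof is correct, but it takes a different route from the paper's, and yours is the rigorous one.

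The paper factors $a-\tilde a=(\sin\theta-\sin\tilde\theta)(\sin\theta+\sin\tilde\theta)$ and Taylor-expands the first factor to first order. It then replaces the second factor by $2\sin\theta$ via ``$\approx$'' and sends the rest into an unquantified $O(\delta\theta^{2})$ remainder. That identifies the leading term $2\pi\sqrt{a(1-a)}/M$, which is essentially the derivative $\sin 2\theta$ of $\sin^2$ at $\theta$ times the grid spacing. However, it only gives it as $\lesssim$, and it never produces the actual constant in the $\pi^2/M^2$ term.

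You use the exact identity $\tilde a-a=\sin(2\theta+\varepsilon)\sin\varepsilon=\sqrt{a(1-a)}\,\sin 2\varepsilon+(1-2a)\sin^{2}\varepsilon$ and then $|\sin x|\le |x|$. This gives $|a-\tilde a|\le 2\sqrt{a(1-a)}\,|\varepsilon|+\varepsilon^{2}$ for every $\varepsilon$ and every $M$, with no asymptotics, so the stated constants come out exactly. This is the trigonometric lemma Brassard et al.\ use for their amplitude estimation theorem. The paper's route buys intuition for where the leading term comes from; yours actually proves the inequality.

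You also make explicit that the bound holds only conditionally on the measurement outcome. The statement as written omits this, and the paper only says it afterwards, in Remark~\ref{rem:prob}.

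One correction to your framing. Conditioning on $\{y^*,M-y^*\}$, with $y^*$ the \emph{nearest} integer to $M\theta/\pi$, does not carry probability $8/\pi^2$. The worst-case computation in the proof of Proposition~\ref{prop:qpe} yields only $4/\pi^2$ for that event, and that is essentially sharp. For example, when $M\theta/\pi$ is a half-integer, each branch splits its mass almost evenly between the two adjacent bins; for $M=16$ and $M\theta/\pi=4.5$ the event has probability about $0.41$.

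The $8/\pi^2$ of Brassard et al.\ belongs to a larger event: $y$ is one of the two grid points bracketing $M\theta/\pi$, or one of their mirror images under $y\mapsto M-y$. That is exactly the event $|\varepsilon|\le\pi/M$. Since your derivation only uses $|\varepsilon|\le\pi/M$, condition on that event instead. The sharper constant you mention for $|\varepsilon|\le\pi/(2M)$ is then available only with the weaker probability $4/\pi^2$.
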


\begin{proof}
Since $\tilde{a}=\sin^{2}(\tilde{\theta})$ and $a=\sin^{2}(\theta)$
with $\tilde{\theta}=\pi y/M$,
\[
  |a-\tilde{a}|
  = |\sin(\theta)-\sin(\tilde{\theta})|
    \cdot|\sin(\theta)+\sin(\tilde{\theta})|.
\]
Set $\delta\theta = \tilde{\theta}-\theta$.
Using the first order Taylor approximation, we obtain
$\sin(\tilde{\theta}) = \sin(\theta)+\cos(\theta)\,\delta\theta + O(\delta\theta^{2})$,
so $|\sin(\theta)-\sin(\tilde{\theta})| \le |\cos\theta|\,|\delta\theta|
+ O(\delta\theta^{2})$.
The spacing between the points on the grid of $\tilde{\theta}$ is $\pi/M$, and hence
$|\delta\theta|\le\pi/M$.
For large $M$, $\sin(\tilde{\theta})\approx\sin(\theta)$, so that
$|\sin\theta+\sin\tilde{\theta}|\approx 2\sin\theta$.  Substituting
$\sin\theta=\sqrt{a}$ and $\cos\theta=\sqrt{1-a}$:
\[
  |a-\tilde{a}|
  \;\lesssim\;
  \sqrt{1-a}\cdot\frac{\pi}{M}\cdot 2\sqrt{a}
  = \frac{2\pi\sqrt{a(1-a)}}{M}.
\]
The second order Taylor remainder is $O((\pi/M)^{2})$,
completing the bound.
\end{proof}

The comparison against the MC method in convergence rates, for equal oracle
cost, is then immediate:
\[\mathcal{O}\!\left(\frac{1}{M}\right)
  \;\ll\;
  \mathcal{O}\!\left(\frac{1}{\sqrt{M}}\right).
\]

\begin{remark}[Probability guarantee]\label{rem:prob}
Proposition~\ref{prop:qae_bound} holds whenever $y$ is one of the two
correct peaks of Proposition~\ref{prop:qpe}. By the concentration bound
$P[Y\in\{y^{*},M-y^{*}\}]\ge 8/\pi^{2}\approx 0.81$, the error bound holds
with probability at least $81\%$. In the remaining ${\approx}19\%$ of cases
the measurement collapses onto a distant bin, producing an error of order one
rather than the theoretical $\mathcal{O}(1/M)$. This is central in
Section~\ref{sec:simulation}.
\end{remark}

\section{Impact on Gradient Estimation}
\label{sec:gradient}
Quantum gradient estimation has already been studied in the batch-learning
setting~\cite{Rebentrost2018}. The present section provides a derivation of
the gradient variance bound as a theorem.

Let $L(\theta)=(\mu(\theta)-c)^{2}$ with
$\mu(\theta)=\mathbb{E}[f(X,\theta)]$, $X\sim p(x)$, and $c\in\mathbb{R}$
a constant.

\begin{theorem}[Quadratic gradient variance reduction]\label{thm:main}
Suppose that $\hat{\mu}=\mu(\theta)+\eta$ is an unbiased estimator of
$\mu(\theta)$ with noise $\eta$, and let
$\nabla_{\theta}\hat{L}=2(\hat{\mu}-c)\,\nabla_{\theta}\mu(\theta)$
be the resulting stochastic gradient. Then its variance is:
\begin{equation}
  \label{eq:grad_var_gen}
  \variance{\nabla_{\theta}\hat{L}}
  = 4\,(\nabla_{\theta}\mu(\theta))^{2}\,\variance{\eta}.
\end{equation}
For MC sampling with $N$ shots the variance scales as $\mathcal{O}(N^{-1})$
(Remark~\ref{rem:mc_opt}), hence:
\begin{equation}
  \label{eq:grad_var_mc}
  \variance{\nabla_{\theta}\hat{L}}_{\mathrm{MC}}
  = \mathcal{O}\!\left(\frac{1}{N}\right).
\end{equation}
Using the QAE estimator instead, and invoking Remark~\ref{rem:prob} together
with Proposition~\ref{prop:qpe}:
\begin{equation}
  \label{eq:grad_var}
  \variance{\nabla_{\theta}\hat{L}}_{\mathrm{QAE}}
  = \mathcal{O}\!\left(\frac{1}{M^{2}}\right).
\end{equation}
\end{theorem}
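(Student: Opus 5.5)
The plan is to obtain \eqref{eq:grad_var_gen} directly from the structure of the estimator, and then specialise it twice by inserting the noise variance of each method. Since $\nabla_{\theta}\mu(\theta)$ and $c$ are deterministic at fixed $\theta$, the only random quantity in the gradient is $\eta$:
\[
  \nabla_{\theta}\hat{L} = 2(\mu(\theta)-c)\nabla_{\theta}\mu(\theta) + 2\nabla_{\theta}\mu(\theta)\,\eta .
\]
The first term is a constant shift and does not contribute to the variance. Applying $\variance{\alpha Z+\beta}=\alpha^{2}\variance{Z}$ then gives \eqref{eq:grad_var_gen}. Unbiasedness, $\expected{\eta}=0$, is not needed for the variance, but it shows the gradient is unbiased, since $\expected{\nabla_\theta\hat L}=\nabla_\theta L$. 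In the vector case $(\nabla_\theta\mu)^2$ should be read as $\nabla_\theta\mu\,\nabla_\theta\mu^{\top}$, and the same argument applies componentwise.

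For \eqref{eq:grad_var_mc}, take $\hat\mu=\bar X$ with $N$ i.i.d.\ samples of $f(X,\theta)$. Section~\ref{sec:monte_carlo} gives $\variance{\eta}=\sigma^2/N$, with $\sigma^2=\variance{f(X,\theta)}$, so the result follows by substitution. Remark~\ref{rem:mc_opt} shows this rate cannot be improved classically.

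For \eqref{eq:grad_var}, I would assume $f$ is rescaled to $[0,1]$, so that $\mu$ is an affine image of an amplitude $a$ encoded by $\mathcal A$. In that case $\variance{\eta}$ is a constant multiple of $\expected{(\tilde a-a)^2}$. This is the main obstacle. Proposition~\ref{prop:qae_bound} only controls the squared error on the good event $G=\{Y\in\{y^*,M-y^*\}\}$, where $(\tilde a-a)^2=\mathcal O(M^{-2})$. Remark~\ref{rem:prob}, however, says that on $G^c$ the error can be $\mathcal O(1)$ with probability up to $19\%$. A direct mean squared error bound therefore only yields $\mathcal O(1)$. Moreover, the QAE estimator is biased, so the hypothesis of unbiasedness is not literally satisfied.

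To recover the stated $\mathcal O(M^{-2})$ rate, I would first try to bound the tail of the distribution in \eqref{eq:qpe_prob}. For an outcome at distance $k$ bins from the peak, $P[Y=y]\lesssim 1/(4k^2)$, and the error there is $|\tilde a-a|\lesssim k\pi/M$. Summing $k^2/M^2\cdot 1/k^2$ over $k$, however, gives only $\mathcal O(1/M)$, so this does not suffice. The fix I would commit to is the standard median boosting trick. Run QAE $r=\mathcal O(\log M)$ times and take the median. By Proposition~\ref{prop:qpe}, each run lies in $G$ with probability at least $8/\pi^2>1/2$, so Hoeffding's inequality bounds the failure probability by $e^{-cr}\le M^{-2}$. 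Then
\[
  \expected{(\tilde a-a)^2}\le \mathcal O(M^{-2})+1\cdot M^{-2}=\mathcal O(M^{-2}),
\]
at a cost of only a logarithmic number of extra oracle calls. Equivalently, the bound can be stated conditionally on $G$, as Remark~\ref{rem:prob} suggests. Combining this with \eqref{eq:grad_var_gen}, and absorbing any residual bias into $\variance{\eta}$ (interpreted as a mean squared error), yields \eqref{eq:grad_var}.
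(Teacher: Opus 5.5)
Your derivation of \eqref{eq:grad_var_gen} and \eqref{eq:grad_var_mc} matches the paper's: drop the constant shift, pull out $2\nabla_\theta\mu$, and substitute $\sigma^2/N$.

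For \eqref{eq:grad_var} your route genuinely differs. The paper's proof is the single sentence that the bound ``follows from Proposition~\ref{prop:qae_bound}''. That is, it squares the $\mathcal{O}(1/M)$ error valid on the good event and reads the result as a variance. What this actually controls is the squared error conditional on the good event, or the squared median error, which is exactly the quantity $(\mathrm{med}\,|e|)^{2}$ reported in Table~\ref{tab:noise}.

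You correctly observe that this is not a variance bound for the single-shot estimator. The QPE tail contributes $\sum_{k\le M}(k/M)^{2}k^{-2}=\Theta(1/M)$, and for generic $a$ this is sharp, as the paper's own one-shot RMSE slope of $-0.446$ (Table~\ref{tab:regression}) confirms. Your median-of-$r$ amplification with $r=\mathcal{O}(\log M)$ then gives a genuine mean-squared-error bound $\mathcal{O}(M^{-2})$. Since \eqref{eq:grad_var_gen} never uses $\expected{\eta}=0$, and $\variance{\eta}\le\expected{\eta^{2}}$, this bounds the gradient variance directly; there is no need to absorb the bias.

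The trade-off is as follows. The paper's route is immediate but yields only a high-probability (or conditional-on-$G$) statement about the single-shot estimator. Contrary to your closing remark, that statement is not equivalent to your boosted bound. Your route proves the variance statement, but for a modified estimator with total oracle cost $rM=\mathcal{O}(M\log M)$. So at equal total cost $N$, the gain in Corollary~\ref{cor:speedup} becomes $N/\log^{2}N$ rather than $N$.

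One step needs repair. Your Hoeffding argument requires $P[G]>1/2$, but for the two mirror bins $\{y^{*},M-y^{*}\}$ only $4/\pi^{2}\approx 0.41$ is actually guaranteed. Each of the two eigen-peaks carries weight $1/2$, and when $M\theta/\pi$ is near a half-integer its nearest bin receives only about $4/\pi^{2}$ of that weight. The $8/\pi^{2}$ bound of Brassard et al.\ refers to the two grid points adjacent to $M\theta/\pi$ together with their mirrors.

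Take $G$ to be that four-bin event instead. If more than half of the runs land in it, the median lies between $\sin^{2}(\pi\lfloor M\theta/\pi\rfloor/M)$ and $\sin^{2}(\pi\lceil M\theta/\pi\rceil/M)$. This interval has length at most $\pi/M$ and contains $a$, so the rest of your argument goes through unchanged.
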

\begin{proof}
The exact gradient is
$\nabla_{\theta}L(\theta) = 2(\mu(\theta)-c)\,\nabla_{\theta}\mu(\theta)$.
Since $\hat{\mu}=\mu(\theta)+\eta$ is unbiased,
$\mathbb{E}[\eta]=0$ and the stochastic gradient is then
$\nabla_{\theta}\hat{L}=2(\mu(\theta)+\eta-c)\,\nabla_{\theta}\mu(\theta)$.
Given that $\mu(\theta)$ and $\nabla_{\theta}\mu(\theta)$ are deterministic:
\begin{align*}
  \variance{\nabla_{\theta}\hat{L}}
  &= \variance{2(\mu(\theta)+\eta-c)\,\nabla_{\theta}\mu(\theta)} \\
  &= 4(\nabla_{\theta}\mu(\theta))^{2}\,\variance{\eta+\mu(\theta)-c} \\
  &= 4(\nabla_{\theta}\mu(\theta))^{2}\,\variance{\eta},
\end{align*}
proving \eqref{eq:grad_var_gen}.
Substituting Remark~\ref{rem:mc_opt} gives
\eqref{eq:grad_var_mc}, while \eqref{eq:grad_var} follows from
Proposition~\ref{prop:qae_bound}.
\end{proof}
\begin{corollary}[Quadratic speed-up in gradient quality]\label{cor:speedup}
If the oracle cost is the same, $N=M$, then:
\[
  \frac{\variance{\nabla_{\theta}\hat{L}}_{\mathrm{MC}}}%
       {\variance{\nabla_{\theta}\hat{L}}_{\mathrm{QAE}}}
  \;=\; \mathcal{O}(N).
\]
The advantage of QAE is unbounded and increases linearly with $N$.
\end{corollary}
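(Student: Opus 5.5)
The corollary follows by dividing the two asymptotic variance expressions of Theorem~\ref{thm:main}. To make the division legitimate I will keep explicit constants rather than bare $\mathcal{O}$-symbols, since a ratio of two $\mathcal{O}$-bounds is not itself controlled. The plan is to use the exact formula \eqref{eq:grad_var_gen} on both sides:
\[
\variance{\nabla_{\theta}\hat{L}} = 4(\nabla_\theta\mu(\theta))^2\,\variance{\eta}.
\]
The common prefactor $4(\nabla_\theta\mu(\theta))^2$ cancels in the ratio, provided it is nonzero. If $\nabla_\theta\mu(\theta)=0$, both variances vanish and the statement is vacuous, so I will treat that case separately. The ratio therefore reduces to $\variance{\eta_{\mathrm{MC}}}/\variance{\eta_{\mathrm{QAE}}}$.

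For the numerator I use the exact MC computation of Section~\ref{sec:monte_carlo}, namely $\variance{\eta_{\mathrm{MC}}}=\sigma^2/N$, with $\sigma^2=\variance{f(X,\theta)}$. Only an upper bound on the numerator is needed, and this equality supplies it.

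For the denominator I need an \emph{upper} bound of the form $\variance{\eta_{\mathrm{QAE}}}\le C/M^2$. The constant $C$ must be independent of $M$, though it may depend on $a$ through $\sqrt{a(1-a)}$. I would obtain it from Proposition~\ref{prop:qae_bound}, which gives $|a-\tilde a|\le 2\pi\sqrt{a(1-a)}/M+\pi^2/M^2$ on the good event of Proposition~\ref{prop:qpe}. Squaring and taking expectations then yields the $\mathcal{O}(M^{-2})$ estimate. Setting $N=M$, the ratio is at least $\sigma^2 M^2/(C N)=(\sigma^2/C)\,N$, which is the claimed linear growth in $N$ (read as $\Omega(N)$, matching the intended $\mathcal{O}(N)$ phrasing).

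The main obstacle is the denominator bound. Remark~\ref{rem:prob} states plainly that with probability up to roughly $19\%$ the measurement lands on a distant bin with error of order one. On that event the contribution to $\mathbb{E}[(\tilde a-a)^2]$ is $\Theta(1)$, not $\mathcal{O}(M^{-2})$, so a naive mean-squared-error bound fails. In addition, $\tilde a$ is not unbiased, contrary to the hypothesis of Theorem~\ref{thm:main}.

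I would repair this by using the full distribution \eqref{eq:qpe_prob}. The probability of outcome $y$ at distance $k$ bins from the peak decays like $1/(4k^2)$, so the raw MSE is only $\mathcal{O}(1/M)$ (summing $k^2/M^2\cdot 1/k^2$ over $M$ bins). To reach $\mathcal{O}(M^{-2})$ I would then replace single-shot QAE by a median of $\mathcal{O}(1)$, or $\mathcal{O}(\log M)$, independent runs. This drives the probability of landing outside the good peaks below $M^{-2}$ at a cost of only a logarithmic factor in oracle calls. Within the peaks the squared error is $\mathcal{O}(M^{-2})$ by Proposition~\ref{prop:qae_bound}. The residual bias is likewise $\mathcal{O}(1/M)$, so its square is absorbed into the $\mathcal{O}(M^{-2})$ term. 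I would flag that, with this repair, the conclusion holds up to that logarithmic overhead, or else conditionally on the good event as Remark~\ref{rem:prob} suggests.
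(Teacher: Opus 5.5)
Your route differs from the paper's, whose whole proof is the one-line division of $\mathcal{O}(N^{-1})$ by $\mathcal{O}(N^{-2})$ from Theorem~\ref{thm:main}. Your objections to that division are correct. The $\Omega(N)$ reading needs a lower bound on the numerator and an upper bound on the denominator, with explicit constants. Moreover, \eqref{eq:grad_var} does not follow from Proposition~\ref{prop:qae_bound}, which only bounds the error on the good event. Your tail count shows more than a gap. For the single-shot estimator and generic $a$ (whenever $M\theta/\pi$ is not close to an integer), the variance really is of order $1/M$. So at $N=M$ the ratio stays bounded along such $M$, and the corollary, read for the paper's estimator, is false. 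The paper's own data agree: the 1-shot RMSE slope in Table~\ref{tab:regression} is $-0.446$, and Table~\ref{tab:noise} divides $\mathrm{Var}_{\mathrm{MC}}$ by $(\mathrm{med}\,|e|)^2$, not by a QAE variance. You can therefore only prove a statement about a modified estimator, and you are right to flag that.

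The genuine gap is in your repair. ``Setting $N=M$'' is inconsistent with it, since a median of $R$ runs costs $RM$ queries. Your mechanism, pushing the failure probability below $M^{-2}$, needs $R=\Theta(\log M)$; with $R=\mathcal{O}(1)$ the failure probability stays constant. At equal budget $N=RM$ this gives only $\variance{\eta_{\mathrm{QAE}}}=\mathcal{O}(\log^2N/N^2)$, i.e.\ a ratio $\Omega(N/\log^2N)$, not linear growth.

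The missing idea is that a \emph{fixed} number of runs already suffices, by a tail argument rather than a failure-probability argument. First, $\tilde a=\sin^2(\pi y/M)$ depends on $y$ only through $z=\min(y,M-y)$, and increasingly. Hence the median of the $\tilde a$'s equals $\sin^2(\pi z_{\mathrm{med}}/M)$, and $|\tilde a-a|\le(\pi/M)|z_{\mathrm{med}}-M\theta/\pi|$. Next, your $1/(4k^2)$ Fejér bound gives $P[|z-M\theta/\pi|\ge k]=\mathcal{O}(1/k)$ per run. For the median of $R=2r+1$ runs this becomes $\mathcal{O}(k^{-(r+1)})$, and $\sum_k(2k+1)k^{-(r+1)}$ converges for $R\ge5$, uniformly in $M$. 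A median of five runs therefore has $\expected{(\tilde a-a)^2}\le C/M^2$ at cost $5M$, giving a ratio $\ge\sigma^2N/(25C)$ with no logarithmic loss.

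Smaller slips:
\begin{itemize}
\item For the $\Omega(N)$ reading you need a \emph{lower} bound on the numerator (and $\sigma^2>0$), not an upper one. The equality $\sigma^2/N$ happens to supply it.
\item The bias step is unnecessary. Identity \eqref{eq:grad_var_gen} holds for any $\eta$, since unbiasedness only enters $\expected{\nabla_\theta\hat L}=\nabla_\theta L$, and $\variance{\eta}\le\expected{(\tilde a-a)^2}$.
\item Your ``$\Theta(1)$ on the bad event'' claim, echoing Remark~\ref{rem:prob}, contradicts your own correct $\mathcal{O}(1/M)$ count. Most bad outcomes land only a few bins from the peak.
\end{itemize}
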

\begin{proof}
Direct ratio of $\mathcal{O}(N^{-1})$ and $\mathcal{O}(N^{-2})$.
\end{proof}
Theorem~\ref{thm:main} and Corollary~\ref{cor:speedup} establish that
QAE gives a \emph{quadratically lower} gradient variance compared to
classical MC at any fixed computational budget, with an
improvement that increases indefinitely with $N$.

\section{Simulation Setting and Full Convergence Analysis}
\label{sec:simulation}

In this section we provide a simulation with a confirmatory purpose about the
theoretical results proved in the previous sections.
All simulations were run via the Aer simulator using the Qiskit quantum
computing framework.

\subsection{Outline of the simulation}

The analysis proceeds in four steps.
We first measure the estimation error of QAE against the oracle cost and
compare it with Monte Carlo, confirming the predicted accuracy
(Section~\ref{subsec:convergence}). We then justify the choice of the metric
used to measure the estimation accuracy, showing that the median rather than
the RMSE reflects the high-confidence guarantee
(Section~\ref{subsec:metric}). Next, we validate the fitted exponents through
confidence intervals, bootstrap, and variance tests, establishing their
statistical significance (Section~\ref{subsec:stat}). Finally, we verify the
consequences of this advantage on the gradient estimation and illustrate its
effect on an SGD trajectory (Section~\ref{subsec:sgd}).

\subsection{Problem definition}

The simulation consists in the estimation of a probability amplitude $a$ in a
single-qubit system.

In order to prevent the results from being interpreted as coincidences, all
the metrics we deal with are computed as the average over a fixed set of
$K=15$ test amplitudes distributed uniformly across $(0,1)$.

$N$ is defined as the computational cost, that is, the number of queries to
$\mathcal{A}$, with $N=2^{m}$ and $m\in\{2,3,4,5,6,7,8\}$ counting qubits.
Each setting uses $R=80$ repetitions per amplitude.

\subsection{Convergence rate}
\label{subsec:convergence}

The QAE error bound~\eqref{eq:qae_bound} is a high-confidence bound on the
median rather than on the $L^2$ norm~\cite{Brassard2002}. Since the RMSE
averages the squared errors, it allows large outliers to dominate the sum,
preventing the metric from decreasing at the expected rate. The median,
instead, depends only on the central value of the error distribution, and
since more than half of the measurements land on the correct bin
(Proposition~\ref{prop:qpe}), it reflects the typical $\mathcal{O}(1/M)$
behavior and is unaffected by the outliers; this is shown in detail in the
next subsection. We therefore adopt the MAE as the primary metric. The
theoretical convergence rates are shown in
Table~\ref{tab:theoretical_convergence}.

\begin{table}[H]
\centering
\caption{Theoretical convergence rates.}
\label{tab:theoretical_convergence}
\begin{tabular}{lcc}
\hline
Algorithm & MAE & $\mathrm{Var}[\nabla L]$ \\
\hline
MC & $\mathcal{O}(N^{-0.5})$ & $\mathcal{O}(N^{-1})$ \\
QAE & $\mathcal{O}(N^{-1.0})$ & $\mathcal{O}(N^{-2})$ \\
\hline
\end{tabular}
\end{table}

Figure~\ref{fig:mae_convergence} reports
the MAE as a function of cost $N$, averaged over the $K$ amplitudes.
\begin{figure}[H]
\centering
\includegraphics[width=0.75\linewidth]{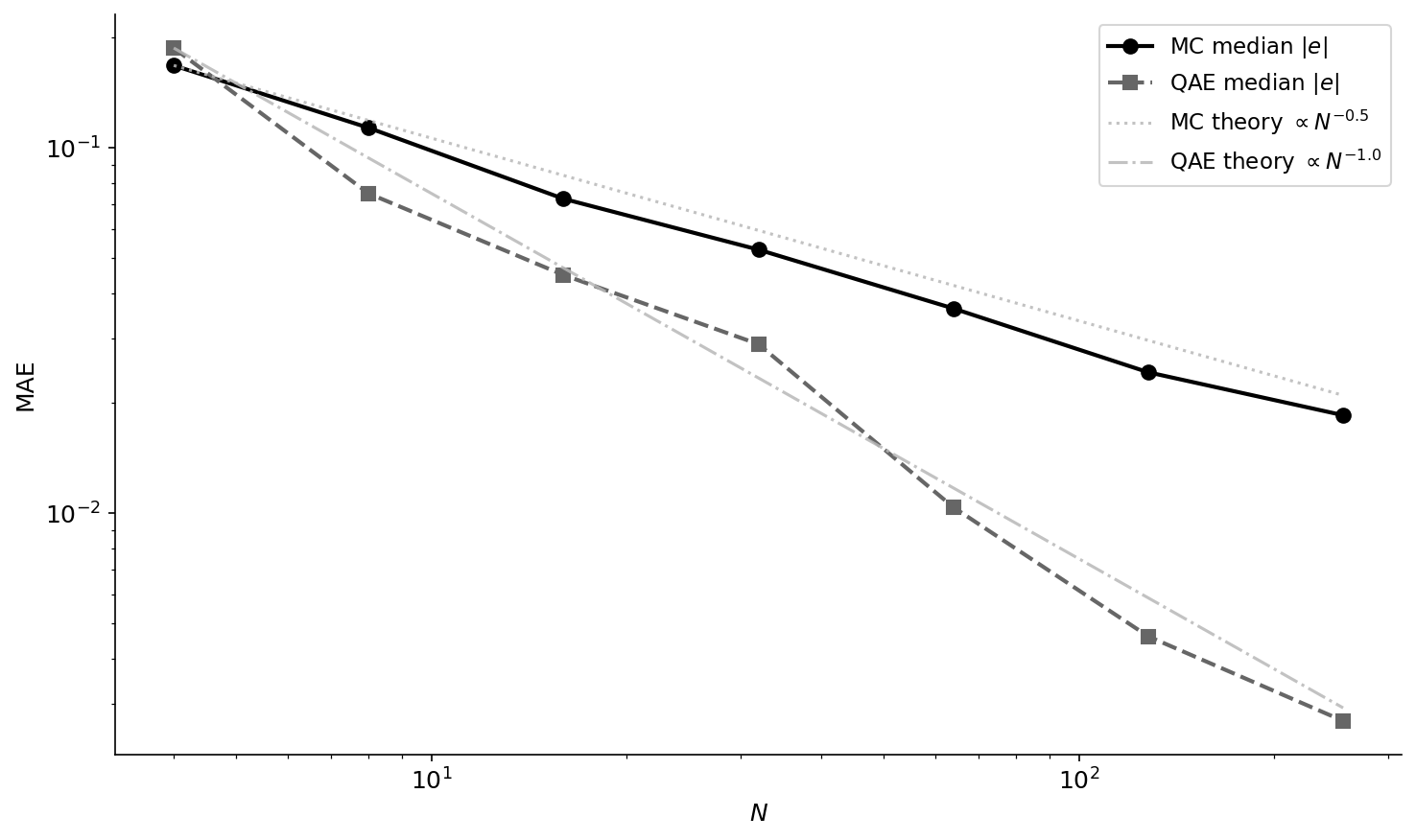}
\caption{MAE convergence (log-log).}\label{fig:mae_convergence}
\end{figure}

\begin{table}[H]
\centering
\caption{Median absolute error: MC vs.\ QAE.}
\label{tab:mae}
\begin{tabular}{ccrr}
\hline
$m$ & $N$ & $\mathrm{MAE}_{\mathrm{MC}}$ & $\mathrm{MAE}_{\mathrm{QAE}}$ \\
\hline
2 &   4 & 0.168 & 0.188 \\
3 &   8 & 0.113 & 0.075 \\
4 &  16 & 0.073 & 0.045 \\
5 &  32 & 0.053 & 0.029 \\
6 &  64 & 0.036 & 0.010 \\
7 & 128 & 0.024 & 0.005 \\
8 & 256 & 0.019 & 0.003 \\
\hline
\end{tabular}
\end{table}

Figure~\ref{fig:mae_convergence} and Table~\ref{tab:mae} show that QAE
underperforms only at $m=2$, while from $m=3$ it falls strictly below MC,
aligning with its theoretical behavior. A regression on the observations
gives slopes of $-1.019$ ($[-1.135,-0.903]$) for QAE and $-0.535$
($[-0.568,-0.503]$) for MC, matching the theoretical $-1$ and $-0.5$.

\subsection{Metric choice}
\label{subsec:metric}

The choice of the MAE as metric becomes clear from
Figure~\ref{fig:metric_contrast}: the QPE tail keeps the $L^2$ norm at
$N^{-0.5}$, placing large errors in $\approx19\%$ of shots, while the median
and the mode-based estimator both recover the $N^{-1}$ rate.

\begin{figure}[H]
\centering
\includegraphics[width=0.75\linewidth]{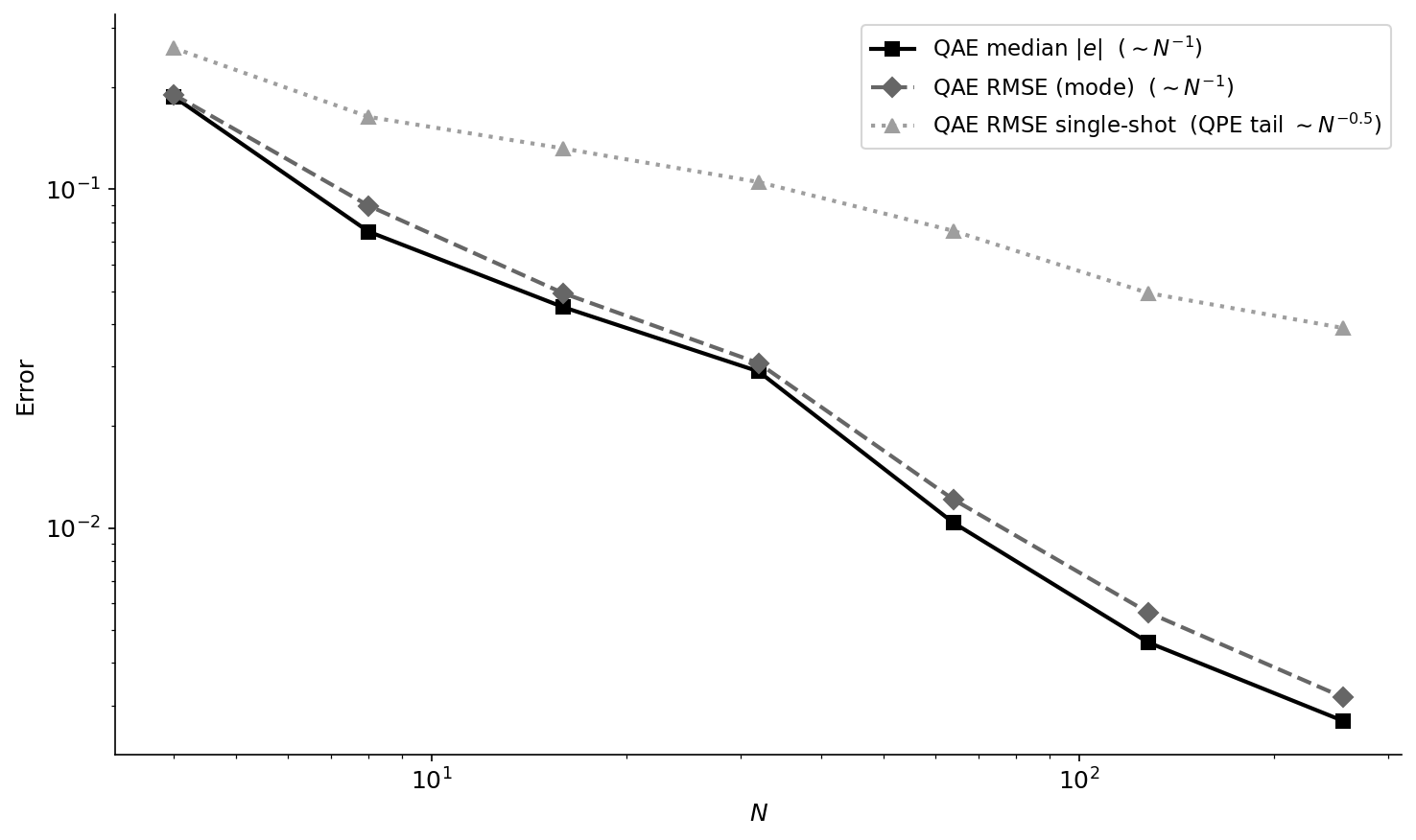}
\caption{QAE error metrics: RMSE vs.\ median.}
\label{fig:metric_contrast}
\end{figure}

The box plot at $a=0.678$ (Fig.~\ref{fig:boxplot}) confirms this: the QAE box
collapses for $m\ge5$, leaving only a few tail outliers, while the MC box
shrinks smoothly as $\sqrt{a(1-a)/N}$.

\begin{figure}[H]
\centering
\includegraphics[width=0.75\linewidth]{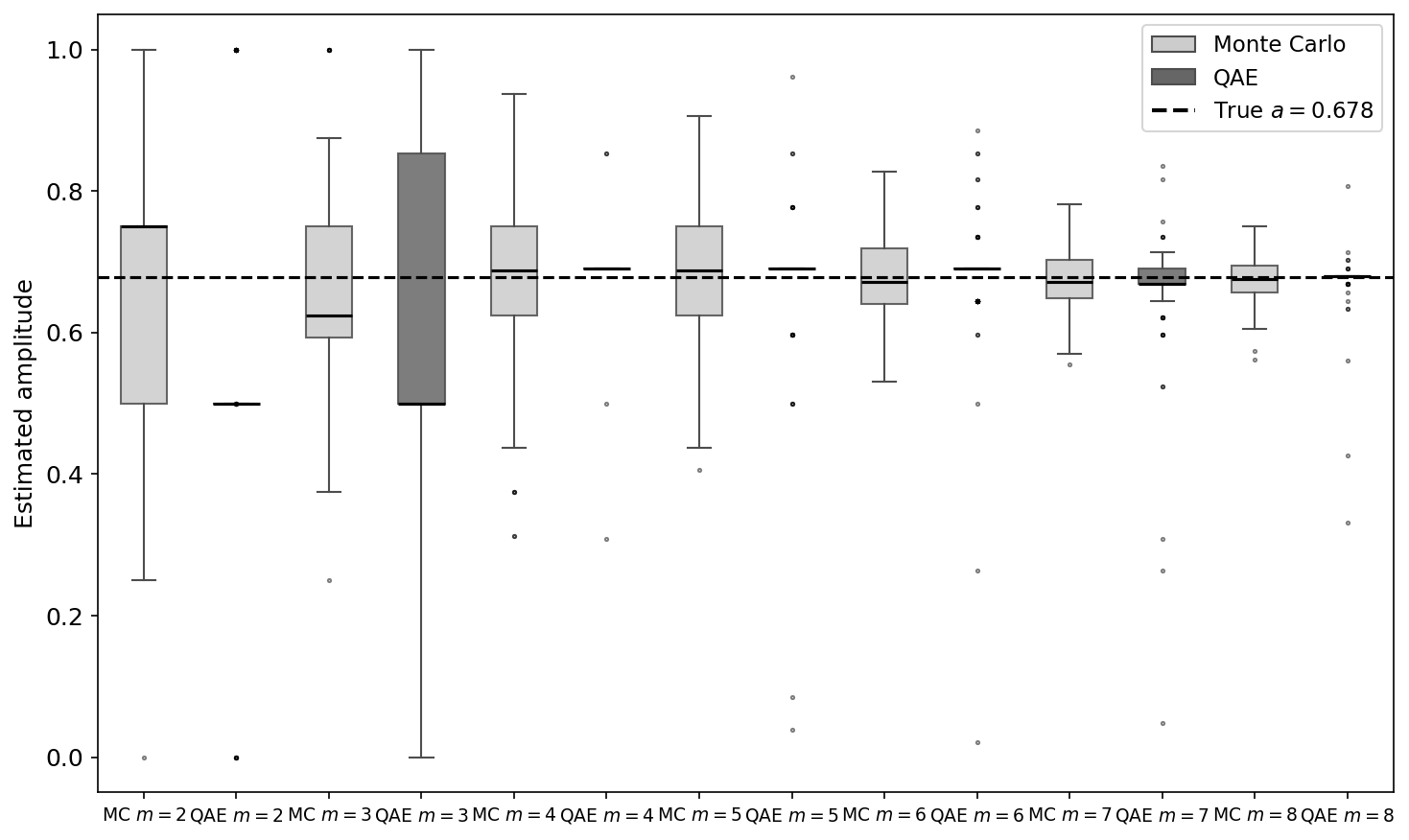}
\caption{Estimator dispersion at $a=0.678$.}
\label{fig:boxplot}
\end{figure}
\newpage
\subsection{Statistical validation}
\label{subsec:stat}

Once the metric is established, we check that the convergence rates are
genuine and not an artifact of the chosen amplitudes or seeds, using three
tests. First, a log-log linear regression fits each error curve, and for every
metric the theoretical exponent falls inside the $95\%$ confidence interval
(Table~\ref{tab:regression}). Second, a bootstrap on the median error at
$N=256$ gives disjoint intervals for the two estimators ($[0.017,0.020]$ MC,
$[0.0024,0.0026]$ QAE), so the gap is not due to chance. Finally, a Levene
test rejects equal variance at $p<10^{-4}$ for all $N\ge32$, confirming that
the lower QAE dispersion is a real effect.

\begin{table}[H]
\centering
\caption{Power-law regression fits.}
\label{tab:regression}
\begin{tabular}{lrrr}
\hline
Metric & Slope & 95\% CI & $R^{2}$ \\
\hline
MAE MC $(-0.5)$ & $-0.535$ & $[-0.568,-0.503]$ & $0.997$ \\
MAE QAE $(-1.0)$ & $-1.019$ & $[-1.135,-0.903]$ & $0.990$ \\
RMSE QAE 1-shot $(-0.5)$ & $-0.446$ & $[-0.496,-0.396]$ & $0.991$ \\
RMSE QAE mode $(-1.0)$ & $-0.990$ & $[-1.071,-0.909]$ & $0.995$ \\
Var MC $(-1.0)$ & $-0.997$ & $[-1.022,-0.971]$ & $1.000$ \\
Median$^2$ QAE $(-2.0)$ & $-2.038$ & $[-2.270,-1.806]$ & $0.990$ \\
\hline
\end{tabular}
\end{table}

\subsection{Gradient noise and SGD}
\label{subsec:sgd}

Turning to the gradient, both Theorem~\ref{thm:main} and
Corollary~\ref{cor:speedup} are verified: Figure~\ref{fig:noise} shows the
expected quadratic gap in the gradient variance, and the noise ratio increases
with $N$, exceeding $100\times$ at $N=256$ (Table~\ref{tab:noise}). The
speed-up also follows the predicted law, as shown in
Figure~\ref{fig:speedup} and Table~\ref{tab:speedup}.

\enlargethispage{2\baselineskip}

\begin{figure}[H]
\centering
\includegraphics[width=0.75\linewidth]{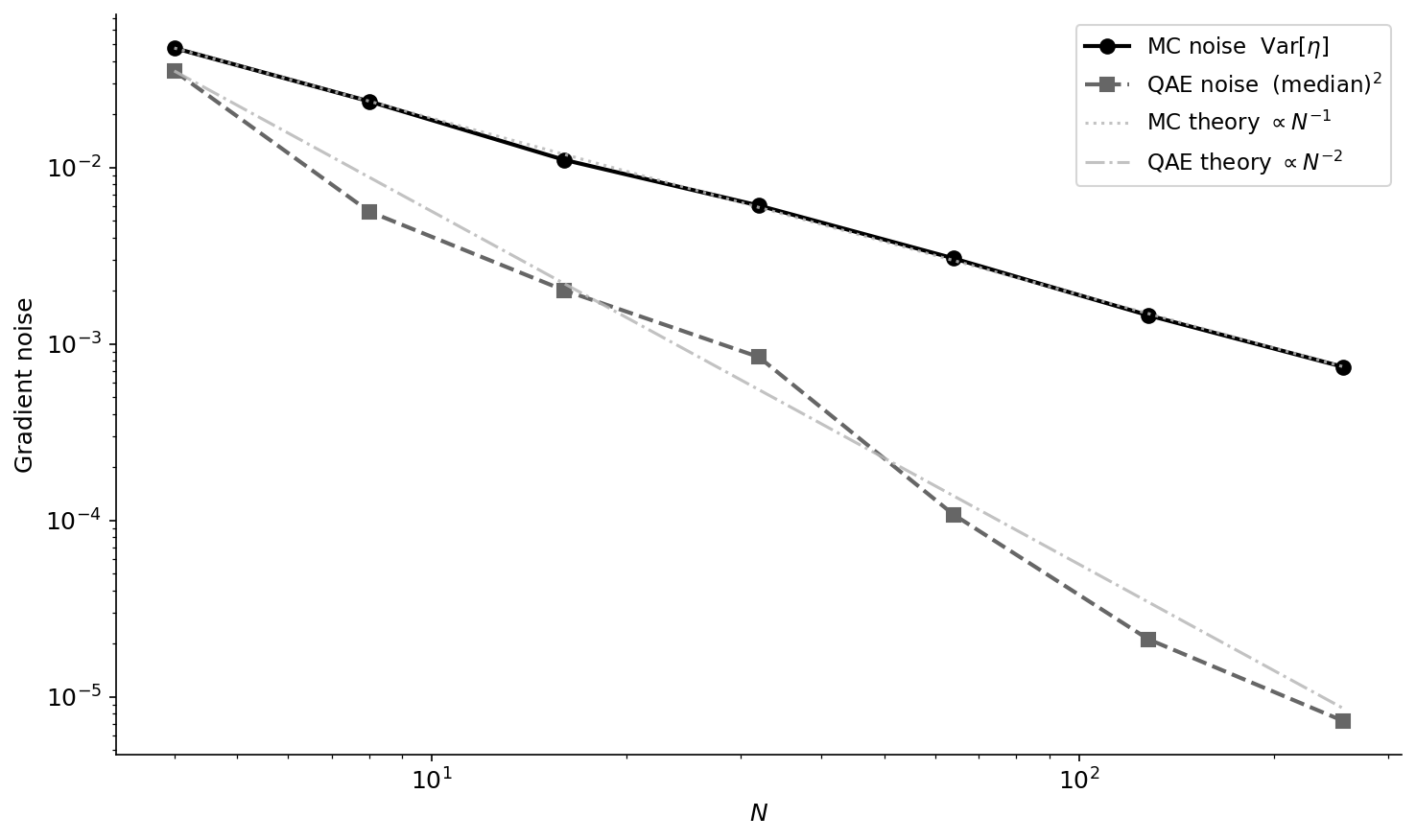}
\caption{Gradient noise: MC vs.\ QAE.}
\label{fig:noise}
\end{figure}

\begin{table}[H]
\centering
\caption{Estimator noise and reduction ratio.}
\label{tab:noise}
\begin{tabular}{ccrrr}
\hline
$m$ & $N$ & $\mathrm{Var}_{\mathrm{MC}}$ & $(\mathrm{med}\,|e|)^{2}_{\mathrm{QAE}}$ & Ratio \\
\hline
2 &   4 & 0.04747 & 0.03519 & $1.35\times$ \\
3 &   8 & 0.02369 & 0.00563 & $4.21\times$ \\
4 &  16 & 0.01104 & 0.00201 & $5.49\times$ \\
5 &  32 & 0.00608 & 0.00084 & $7.23\times$ \\
6 &  64 & 0.00305 & 0.00011 & $28.3\times$ \\
7 & 128 & 0.00145 & 0.000021 & $69.1\times$ \\
8 & 256 & 0.00074 & 0.0000073 & $101\times$ \\
\hline
\end{tabular}
\end{table}

\begin{figure}[H]
\centering
\includegraphics[width=0.75\linewidth]{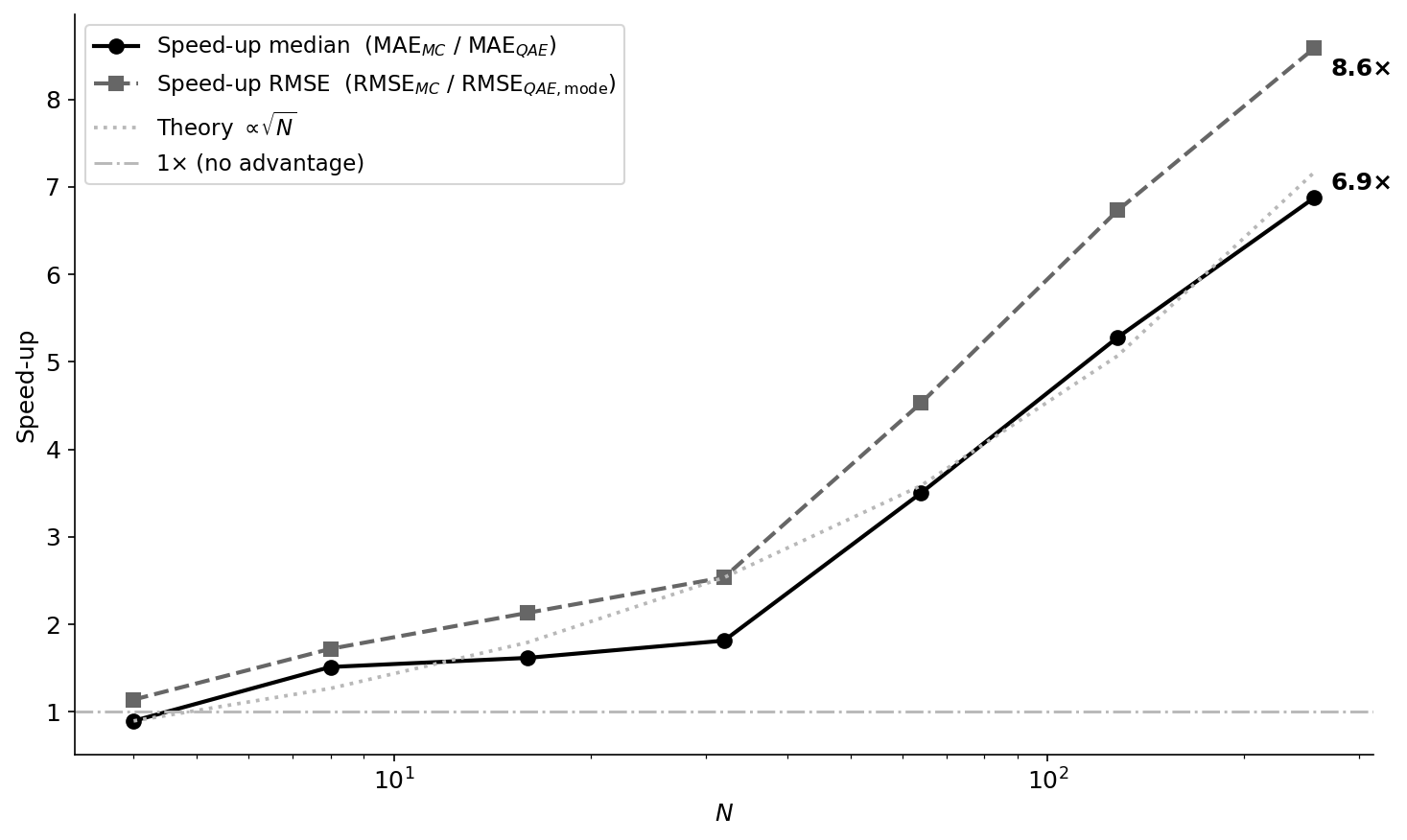}
\caption{MAE and RMSE speed-ups.}\label{fig:speedup}
\end{figure}
\begin{table}[H]
\centering
\caption{Error-based speed-up.}\label{tab:speedup}
\begin{tabular}{ccrr}
\hline
$m$ & $N$ & Speed-up (MAE) & Speed-up (RMSE) \\
\hline
2 &   4 & $0.90\times$ & $1.14\times$ \\
3 &   8 & $1.51\times$ & $1.72\times$ \\
4 &  16 & $1.62\times$ & $2.13\times$ \\
5 &  32 & $1.81\times$ & $2.54\times$ \\
6 &  64 & $3.50\times$ & $4.53\times$ \\
7 & 128 & $5.28\times$ & $6.73\times$ \\
8 & 256 & $6.88\times$ & $8.59\times$ \\
\hline
\end{tabular}
\end{table}

Finally, in order to show the effect on an optimization case, we run gradient
descent on $L(p)=(\sin^2 p - c)^2$ with $c=0.30$, replacing the amplitude with
a noisy estimate whose standard deviation equals the MAE measured at $N=256$
($\sigma_{\mathrm{MC}}=0.0185$, $\sigma_{\mathrm{QAE}}=0.0027$), using
$\eta=0.80$ and $p_0=0.30$.

\begin{figure}[H]
\centering
\includegraphics[width=0.75\linewidth]{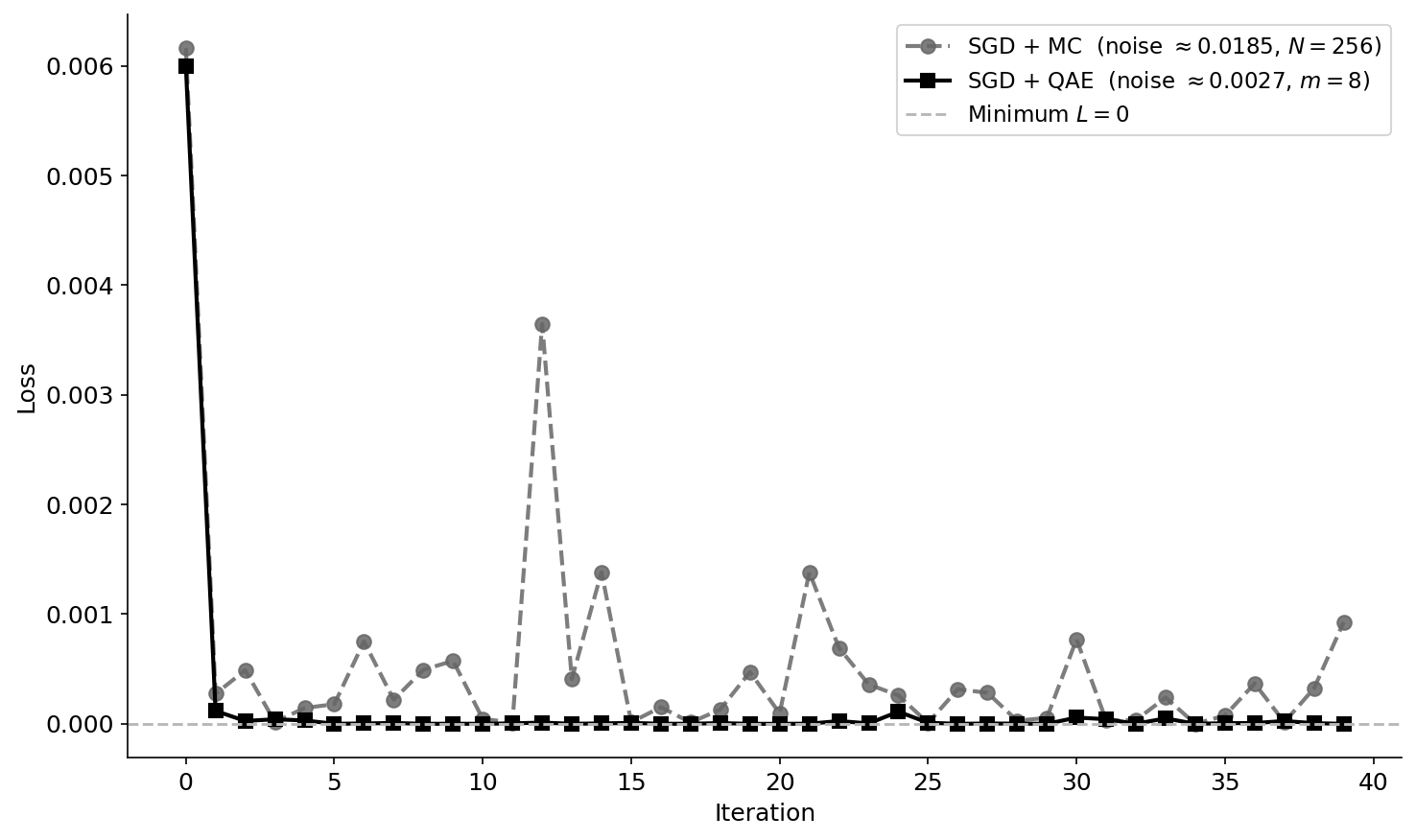}
\caption{SGD trajectories: MC vs.\ QAE.}\label{fig:sgd}
\end{figure}

Figure~\ref{fig:sgd} compares the two gradient descent trajectories over 40
iterations, obtained with the MC and QAE estimators. The MC path reaches the
minimum but continues to oscillate and never settles, since its
$\mathcal{O}(N^{-1})$ variance perturbs every step. The QAE path, instead,
converges within a few iterations and stays flat. The quadratic variance
reduction thus translates into a real qualitative difference at a fixed
computational budget.

\section{Conclusion}
\label{sec:conclusion}

We have demonstrated that the classical limitations associated with
gradient-based stochastic optimization can be overcome through the use of QAE.
Theorem~\ref{thm:main} and Corollary~\ref{cor:speedup} establish the quadratic
decrease of the gradient variance, with an advantage that grows linearly in
the computational budget, and the final simulation confirms it with
statistical significance.

\paragraph{Assumptions and limitations.}
Two assumptions restrict the scope of these results: the single-qubit model
and the hardware. The first one assumes that the state preparation operator
$\mathcal{A}$ is a trivial rotation; however, a real case, specifically a
multi-dimensional gradient, would require $\mathcal{A}$ to encode the full
data distribution in amplitude. This step relies on a quantum random access
memory (QRAM), which has yet to be practically realized~\cite{Aaronson2015}.
If the state preparation is costly, it may absorb the QAE speed-up and cancel
its advantage entirely. Second, QAE assumes a device capable of preserving
coherence over $\mathcal{O}(M)$ gates, but on NISQ hardware~\cite{Preskill2018}
gate noise degrades the advantage, though noise-resilient
variants~\cite{Suzuki2020} mitigate it. A natural extension is to place the
estimator in a variational quantum circuit~\cite{Cerezo2021}, where
$\mathcal{A}$ is the parametric circuit itself, for which the quantity to
estimate is already quantum and the QRAM problem disappears entirely.



\end{document}